\pdfoutput=1
\documentclass[11pt]{article}

\usepackage[preprint]{acl}

\usepackage{times}
\usepackage{latexsym}

\usepackage[T1]{fontenc}
\usepackage[utf8]{inputenc}

\usepackage{microtype}

\usepackage{inconsolata}

\usepackage{graphicx}

\usepackage{url}
\usepackage{booktabs}
\usepackage{amsfonts}
\usepackage{nicefrac}
\usepackage{xcolor}
\usepackage{fancyvrb}
\usepackage{fvextra}

\usepackage{wrapfig}
\usepackage{tikz}
\usepackage{float}
\usepackage{caption}
\usepackage{enumerate}
\usepackage{array}
\usepackage[ruled, linesnumbered, boxed]{algorithm2e}
\usepackage{todonotes}

\usepackage{multirow}
\usepackage{makecell}
\usepackage{bbding}
\usepackage{mathtools}

\usepackage[nameinlink]{cleveref}
\Crefname{figure}{Figure}{Figures}
\crefname{figure}{Figure}{Figures}
\crefname{example}{Example}{Example}
\crefname{theorem}{Theorem}{Theorem}
\crefname{corollary}{Corollary}{Corollary}
\crefname{lemma}{Lemma}{Lemma}
\crefname{proposition}{Proposition}{Proposition}
\crefname{assumption}{Assumption}{Assumption}
\crefname{section}{Section}{Section}
\crefname{algorithm}{Algorithm}{Algorithm}

\usepackage{amsthm,thmtools}
\declaretheorem[name=Theorem,numberwithin=section]{theorem}
\declaretheorem[name=Definition,style=definition]{definition}

\declaretheorem[name=Proposition,numberlike=theorem]{proposition}

\title{Reducing Detail Hallucinations in Long-Context Regulatory \\ Understanding via Targeted Preference Optimization}

\author{Yang Liu$^1$, Bin Chong$^{2}$\thanks{Corresponding author.}, Yuhan Lin$^3$, Chongyang Zhang$^4$, Hao Zheng$^4$, Ziyi Zhang$^2$, \\
\textbf{Jiayu Liang}$^5$, \textbf{Ran Ran}$^6$,
\textbf{Qian Li}$^7$, \textbf{Kefu Xu}$^2$\\
  $^1$Tsinghua University
  $^2$Peking University
  $^3$Fudan University\\
  $^4$Fulai Shuchuang (Beijing) Intelligent Technology Co., Ltd
  $^5$Soochow University\\
  $^6$University of California, Berkeley
  $^7$Beijing University of Posts and Telecommunications\\
  }

\begin{document}
\maketitle
\begin{abstract}
Large language models (LLMs) frequently produce \emph{detail hallucinations} when processing long regulatory documents, including subtle errors in threshold values, units, scopes, obligation levels, and conditions that preserve surface plausibility while corrupting safety-critical parameters. We formalize this phenomenon through a fine-grained \emph{Detail Error Taxonomy} of five error types and introduce \textbf{DetailBench}, a benchmark built from 172 real regulatory documents and 150 synthetic documents spanning three jurisdictions, with human-annotated detail-level ground truth comprising 13,000 preference pairs. We propose \textbf{DetailDPO}, a targeted preference optimization framework that constructs contrastive pairs differing in exactly one detail dimension, concentrating DPO gradient signal on detail-bearing~tokens. We provide theoretical analysis showing why \emph{minimal detail perturbation} pairs yield gradient concentration under mild assumptions. Experiments on the Qwen2.5 family (7B, 14B, 72B) and Llama-3.1-8B across three context-length tiers (8K--64K tokens) show that DetailDPO reduces the Detail Error Rate by 42--61\% relative to baselines, with consistent gains across all five error types and cross-domain transfer to financial and medical documents.
\end{abstract}

\section{Introduction}

Large language models (LLMs) have demonstrated remarkable capabilities in text understanding and generation~\cite{brown2020language,ouyang2022training}, including promising applications in legal AI~\cite{katz2017general,chalkidis2020legal} and long-document comprehension~\cite{beltagy2020longformer}. However, a critical yet under-studied failure mode emerges when LLMs process long regulatory and technical documents: \emph{detail hallucination}. In this failure mode, the model produces outputs that are broadly correct in structure and narrative but contain subtle errors in safety-critical details such as threshold numbers, measurement units, regulatory scopes, obligation levels (e.g., ``shall'' vs.\ ``should''), and conditional applicability. Unlike coarse-grained hallucination where the model fabricates entirely unsupported claims, detail hallucination preserves surface plausibility while silently corrupting parameters that determine real-world compliance outcomes, making it especially~insidious.

In compliance-sensitive domains, including energy regulation~\cite{noyan2023hydrogen}, pharmaceutical safety, and financial oversight, a single misquoted pressure threshold or a confused unit conversion can lead to catastrophic safety failures or costly legal violations. Yet existing faithfulness evaluation and mitigation methods predominantly target coarse-grained factual consistency~\cite{ji2023survey,maynez2020faithfulness}, leaving fine-grained detail errors largely unaddressed. As context lengths grow (modern LLMs support 32K--128K+ tokens), the problem intensifies: longer inputs contain more detail-bearing tokens competing for model attention, and empirical evidence suggests that detail accuracy degrades faster than overall task performance as context length~increases.

Systematically addressing detail hallucination in long-context settings presents three key challenges. \textbf{(C1)~Lack of fine-grained error characterization:} Existing hallucination taxonomies do not distinguish between types of detail errors (numeric, unit, scope, level, condition), hindering targeted mitigation. \textbf{(C2)~Absence of detail-level benchmarks:} Current long-context benchmarks~\cite{bai2024longbench,shaham2022scrolls} evaluate overall comprehension but do not isolate detail-level faithfulness, making it difficult to measure progress. \textbf{(C3)~No alignment methods targeting detail faithfulness:} Standard preference optimization treats all errors equally; there is no mechanism to concentrate learning signal on detail-bearing tokens where errors are most consequential.

To address these gaps, we make the following contributions:
\begin{enumerate}
    \item \textbf{Detail Error Taxonomy and DetailBench.} We propose a principled taxonomy of five detail error types grounded in regulatory compliance practice, and construct \textbf{DetailBench}, a benchmark built from real regulatory documents (Chinese GB standards, US~CFR, EUR-Lex) with human-annotated detail-level ground truth spanning multiple context lengths (8K--64K tokens). (\textbf{C1, C2})
    \item \textbf{DetailDPO Framework.} We introduce a targeted preference optimization approach that constructs \emph{minimal detail perturbation} pairs, where chosen and rejected responses differ only in a single detail dimension, and show theoretically and empirically that this concentrates DPO gradient on detail-bearing~tokens. (\textbf{C3})
    \item \textbf{Theoretical Analysis.} We formalize the Detail Error Rate (DER) metric and prove that minimal-perturbation DPO yields gradient concentration on detail tokens under mild assumptions, providing a principled explanation for why targeted preference pair construction outperforms generic~alternatives.
    \item \textbf{Comprehensive Experiments.} We evaluate across the Qwen2.5 family (7B/14B/72B) and Llama-3.1-8B, three context lengths (8K--64K), five alignment baselines (SFT, RLHF, DPO, KTO, SimPO), and three domains (regulatory, financial, medical), demonstrating consistent detail error reduction and cross-domain~transferability.
\end{enumerate}

\section{Problem Formulation}

\subsection{Task Definition}

Given a set of regulatory documents $\mathcal{D} = \{d_1, d_2, \ldots, d_n\}$, where each document $d_i$ contains segments $\{s_{i,1}, s_{i,2}, \ldots, s_{i,m_i}\}$, and a compliance query $q$, the task is to produce a structured compliance analysis $\mathcal{A} = (y, \mathcal{C}, \mathcal{R}, \mathcal{E})$ comprising: compliance judgment $y \in \{\text{compliant}, \text{non-compliant}\}$, extracted constraints $\mathcal{C} = \{c_1, \ldots, c_k\}$ each with associated detail parameters, risk assessment $\mathcal{R}$, and evidence citations $\mathcal{E} = \{(d_i, s_{i,j}, \text{quote})\}$ binding each conclusion to source passages.

The total input length $L = |q| + \sum_{i=1}^n \sum_{j=1}^{m_i} |s_{i,j}|$ ranges from 8K to 64K+ tokens in realistic multi-document settings.

\subsection{Detail Error Taxonomy}
\label{sec:taxonomy}

We identify five types of \emph{detail errors} that LLMs commit when extracting information from regulatory documents. Unlike coarse hallucination---which fabricates entire unsupported claims---each detail error preserves the overall claim structure while silently corrupting a single safety-critical parameter. The five types span the principal axes along which a numeric or textual constraint can be misrepresented: \textbf{threshold errors} alter the numeric limit itself; \textbf{unit errors} swap the measurement unit without adjusting the value, producing a physically incompatible quantity; \textbf{scope errors} expand or contract the set of entities or situations to which a requirement applies; \textbf{level errors} change the obligation strength (e.g., mandatory $\to$ advisory), altering the legal force of the requirement; and \textbf{condition errors} omit or modify the conditional clause that gates when a requirement is triggered, potentially rendering an unconditional rule as conditional or vice versa. Formally:

\begin{definition}[Detail Error Types]
\label{def:error_types}
Let $c^* = (b, u, \sigma, \ell, \phi)$ be a ground-truth constraint with threshold $b$, unit $u$, scope $\sigma$, obligation level $\ell$, and conditions $\phi$. A model prediction $\hat{c} = (\hat{b}, \hat{u}, \hat{\sigma}, \hat{\ell}, \hat{\phi})$ commits:
\begin{enumerate}
    \item \textbf{Threshold error} ($\tau_1$): $\hat{b} \neq b$ (e.g., ``70\,MPa'' $\to$ ``75\,MPa'')
    \item \textbf{Unit error} ($\tau_2$): $\hat{u} \neq u$ without value conversion (e.g., ``70\,MPa'' $\to$ ``70\,bar'')
    \item \textbf{Scope error} ($\tau_3$): $\hat{\sigma} \neq \sigma$ (e.g., ``stationary storage'' $\to$ ``all storage systems'')
    \item \textbf{Level error} ($\tau_4$): $\hat{\ell} \neq \ell$ (e.g., ``shall'' $\to$ ``should'')
    \item \textbf{Condition error} ($\tau_5$): $\hat{\phi} \neq \phi$ (e.g., dropping the condition ``when pressure $> 50$\,MPa'')
\end{enumerate}
\end{definition}

These five types are mutually exclusive by construction: each targets a distinct slot in the constraint tuple $c^* = (b, u, \sigma, \ell, \phi)$, so a single prediction can independently commit errors of multiple types. In practice, threshold ($\tau_1$) and unit ($\tau_2$) errors tend to be the most localized---corrupting a single numeric token or unit string---and therefore the most amenable to targeted correction. Scope ($\tau_3$), level ($\tau_4$), and condition ($\tau_5$) errors typically span multiple tokens and arise from misinterpretation of clause structure rather than simple numeric substitution, making them subtler and often more consequential for compliance~outcomes.

\Cref{tab:error_type_examples} provides concrete examples of each error type drawn from regulatory compliance scenarios.

\begin{table*}[t]
\centering
\small
\begin{tabular}{llll}
\toprule
\textbf{Type} & \textbf{Ground Truth} & \textbf{Model Output (Error)} & \textbf{What Changed} \\
\midrule
$\tau_1$ Threshold & $\leq$ \textbf{70.0}\,MPa & $\leq$ \textbf{75.0}\,MPa & Numeric value altered \\
$\tau_2$ Unit & 70.0\,\textbf{MPa} & 70.0\,\textbf{bar} & Unit swapped, no conversion \\
$\tau_3$ Scope & \textbf{stationary} storage & \textbf{all} storage systems & Scope broadened \\
$\tau_4$ Level & operators \textbf{shall} hold cert. & operators \textbf{should} hold cert. & Obligation downgraded \\
$\tau_5$ Condition & \textbf{when pressure $>$ 50\,MPa} & \emph{(condition omitted)} & Condition dropped \\
\bottomrule
\end{tabular}%
\caption{Concrete examples of each detail error type. Bold indicates the corrupted element. These errors preserve overall claim plausibility while altering safety-critical parameters.}
\label{tab:error_type_examples}
\end{table*}

\subsection{Detail Error Rate (DER)}
\label{sec:der}

We formalize a metric to measure detail faithfulness. Each constraint $c_k \in \mathcal{C}$ contains five detail elements corresponding to the components $(b, u, \sigma, \ell, \phi)$; we denote the $k$-th ground-truth detail element of sample $i$ as $d_{i,k}^*$ and its prediction as $\hat{d}_{i,k}$.

\begin{definition}[Detail Error Rate]
Given $N$ test samples, where sample $i$ contains $K_i \geq 0$ detail-bearing elements and the total count $K_{\text{tot}} = \sum_{i=1}^N K_i > 0$, define:
\begin{equation}
\text{DER} = \frac{1}{K_{\text{tot}}} \sum_{i=1}^{N} \sum_{k=1}^{K_i} \mathbb{1}\!\left[\hat{d}_{i,k} \neq d_{i,k}^*\right]
\end{equation}
where $d_{i,k}^*$ is the $k$-th ground-truth detail element and $\hat{d}_{i,k}$ the corresponding prediction. The type-specific DER for error type $\tau_j$ (defined when $|\mathcal{S}_j| > 0$) is:
\begin{equation}
\text{DER}_{\tau_j} = \frac{1}{|\mathcal{S}_j|} \sum_{(i,k) \in \mathcal{S}_j} \mathbb{1}\!\left[\hat{d}_{i,k} \neq d_{i,k}^*\right]
\end{equation}
where $\mathcal{S}_j$ is the set of detail elements of type $\tau_j$.
\end{definition}

The matching predicate $\hat{d}_{i,k} \neq d_{i,k}^*$ is type-dependent: for threshold errors ($\tau_1$), we require exact numeric match after parsing to a canonical float representation (e.g., ``70.0'' $\neq$ ``75.0''); for unit errors ($\tau_2$), exact unit string match after canonicalization (e.g., ``MPa'' $\neq$ ``bar''). For scope ($\tau_3$), level ($\tau_4$), and condition ($\tau_5$) errors, we first normalize by lowercasing and removing punctuation, then compute token-level F1 between the predicted and ground-truth spans; a match is accepted when F1 $\geq 0.8$. Ambiguous cases (F1 in $[0.7, 0.9]$) undergo manual adjudication.

DER complements standard metrics (compliance accuracy, evidence F1) by isolating the fine-grained faithfulness~dimension.

\section{DetailDPO Framework}
\label{sec:method}

The DetailDPO framework consists of three components: (1) data construction combining real and synthetic regulatory documents, (2) minimal detail perturbation for preference pair generation, and (3) DPO training with detail-aware gradient concentration.

\begin{figure*}[t]
\centering
\includegraphics[width=0.95\linewidth]{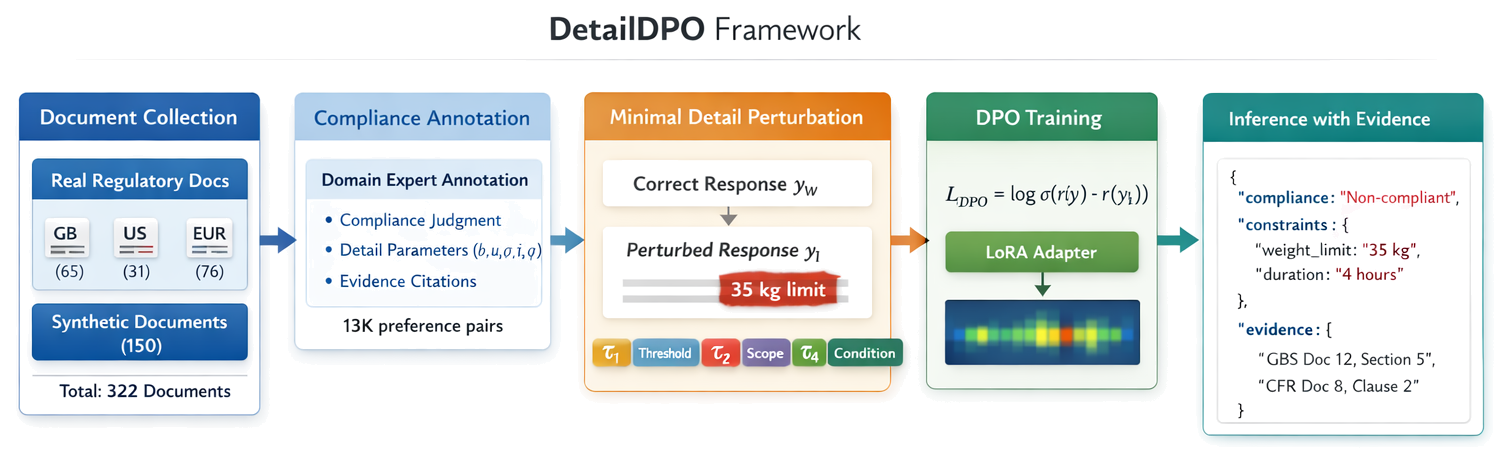}
\caption{DetailDPO pipeline: (1) Real + synthetic regulatory document collection, (2) Compliance annotation with detail-level ground truth, (3) Minimal detail perturbation to generate preference pairs for 5 error types, (4) DPO training with LoRA, (5) Evidence citation mechanism binding conclusions to source segments.}
\label{fig:pipeline}
\end{figure*}

\Cref{fig:pipeline} illustrates the complete pipeline from data construction through training to~inference.

\subsection{Data Construction}
\label{sec:data}

\paragraph{Real Regulatory Documents.} We collect 172 documents from three official sources: (1) 65 Chinese GB/T standards on hydrogen production, storage, transportation, and safety, reconstructed from publicly available metadata and clause structures published by the National Standard system; (2) 31 US Code of Federal Regulations (CFR) parts via the eCFR REST API, focusing on Title~49 (Transportation, Parts~171--199) and Title~40 (Environmental Protection, Parts~60--268); (3) 76 EU regulations retrieved through the Publications Office CELLAR SPARQL endpoint, covering hydrogen infrastructure, pressure equipment, and environmental directives. These provide authoritative text with traceable~identifiers.

\paragraph{Synthetic Augmentation.} To increase training data volume and ensure balanced coverage of all error types, we supplement with synthetic documents generated from domain templates. For each synthetic document $d_i$, we sample a topic domain, generate structured segments with threshold constraints (sampled with probability 0.7), requirement levels, and conditional clauses. Crucially, synthetic documents mirror the structure and terminology of real standards to support~transfer.

\paragraph{Compliance Annotation.} For each (document set, query) pair, domain experts annotate: compliance judgment, extracted constraints with all five detail parameters $(b, u, \sigma, \ell, \phi)$, and evidence citations linking each constraint to source segments. For synthetic data, annotations are generated programmatically and verified by~sampling.

\subsection{Minimal Detail Perturbation}
\label{sec:perturbation}

The key insight of DetailDPO is the construction of preference pairs where the rejected response differs from the chosen response in \emph{exactly one detail dimension}:

\begin{definition}[Minimal Detail Perturbation]
\label{def:mdp}
Given a correct response $y_w$ containing detail elements $\{d_1^*, \ldots, d_K^*\}$, a minimal detail perturbation of type $\tau_j$ produces $y_l$ by replacing exactly one detail element $d_k^*$ of type $\tau_j$ with an incorrect value $\tilde{d}_k$, while keeping all other tokens identical:
\begin{equation}
y_l = y_w[\,d_k^* \leftarrow \tilde{d}_k\,], \quad \text{type}(d_k^*) = \tau_j
\end{equation}
\end{definition}

Concretely, for threshold errors ($\tau_1$), $\tilde{d}_k$ is produced by multiplying the original value by a factor drawn from $[0.8,0.9]\cup[1.1,1.2]$, yielding a plausible yet clearly incorrect perturbation. For unit errors ($\tau_2$), the unit symbol is replaced by a compatible but distinct unit (e.g., MPa $\to$ bar) while the numeric value remains unchanged. Scope ($\tau_3$) and condition ($\tau_5$) perturbations broaden or narrow the relevant phrase, or drop the conditional clause entirely. Level ($\tau_4$) perturbations substitute the obligation keyword (e.g., ``shall'' $\to$ ``should'', ``must'' $\to$ ``may''). \Cref{alg:perturbation} in the Appendix gives the complete procedure for all five types.

The defining property of this construction is that $y_w$ and $y_l$ share every token except those at the perturbed detail positions $\mathcal{P}$. Consequently, the preference signal is concentrated on exactly one error dimension, with no confounding signal from unrelated token differences. We formalize why this leads to gradient concentration on detail-bearing tokens in the~following.

\subsection{DPO Training with Gradient Concentration}
\label{sec:dpo_training}

We train with the standard DPO objective:
\begin{equation}
\mathcal{L}_{\text{DPO}} = -\mathbb{E}_{(x,y_w,y_l)}\!\Bigl[\log\sigma\bigl(\beta\,\Delta(x,y_w,y_l)\bigr)\Bigr]
\label{eq:dpo}
\end{equation}
where $\Delta(x,y_w,y_l) = \log\frac{\pi_\theta(y_w\mid x)}{\pi_{\mathrm{ref}}(y_w\mid x)} - \log\frac{\pi_\theta(y_l\mid x)}{\pi_{\mathrm{ref}}(y_l\mid x)}$, $\pi_\theta$ is the policy, $\pi_{\mathrm{ref}}$ the reference model, and $\beta$ the temperature.

The critical property of minimal detail perturbation is that $y_w$ and $y_l$ share all tokens except at the perturbed detail positions. Let $\mathcal{P} \subset \{1,\ldots,T\}$ denote the token positions where $y_w$ and $y_l$ differ (the \emph{detail tokens}), and $\bar{\mathcal{P}}$ the shared positions.

\begin{proposition}[Gradient Concentration on Detail Tokens]
\label{prop:gradient}
Under the DPO objective with minimal detail perturbation pairs $(y_w, y_l)$, let $\Delta_t = \nabla_\theta \log \pi_\theta(y_{w,t}|y_{w,<t}, x) - \nabla_\theta \log \pi_\theta(y_{l,t}|y_{l,<t}, x)$ be the per-token gradient difference. Assume:
\begin{enumerate}
    \item[(A1)] $y_w$ and $y_l$ agree on all tokens before position $\min(\mathcal{P})$;
    \item[(A2)] \textbf{($\epsilon$-bounded sensitivity)} For all $t \in \bar{\mathcal{P}}$ with $t > \min(\mathcal{P})$, $\|\Delta_t\| \leq \epsilon$ for some $\epsilon > 0$;
    \item[(A3)] There exists $\delta > 0$ such that $\|\Delta_t\| \geq \delta$ for all $t \in \mathcal{P}$.
\end{enumerate}
Then the contribution of non-detail tokens to the total gradient is bounded:
\begin{equation}
\left\| \sum_{t \in \bar{\mathcal{P}}} \Delta_t \right\| \leq |\bar{\mathcal{P}}| \cdot \epsilon
\end{equation}
while $\left\| \sum_{t \in \mathcal{P}} \Delta_t \right\| \geq \sqrt{|\mathcal{P}|} \cdot \delta$ in general (by triangle inequality), and $\geq |\mathcal{P}| \cdot \delta$ when the per-position gradients within $\mathcal{P}$ are aligned. The gradient concentrates on detail tokens whenever $|\bar{\mathcal{P}}| \cdot \epsilon < \sqrt{|\mathcal{P}|} \cdot \delta$.
\end{proposition}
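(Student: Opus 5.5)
The plan is to reduce the DPO gradient to a sum of per-token terms $\Delta_t$ and then bound the two partial sums separately using (A1)--(A3). First, differentiating \eqref{eq:dpo} for a single triple gives
\[
\nabla_\theta \mathcal{L}_{\text{DPO}} = -\beta\,\sigma\bigl(-\beta\,\Delta(x,y_w,y_l)\bigr)\,\Bigl(\nabla_\theta\log\pi_\theta(y_w\mid x)-\nabla_\theta\log\pi_\theta(y_l\mid x)\Bigr),
\]
since $\pi_{\mathrm{ref}}$ does not depend on $\theta$. By \Cref{def:mdp} the two responses have the same length $T$ when the perturbation is token-aligned. I will assume this; otherwise positions are re-indexed after the perturbed span and the extra positions are absorbed into $\mathcal{P}$. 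The autoregressive factorization then turns the bracket into $\sum_{t=1}^{T}\Delta_t$. The scalar prefactor $\beta\,\sigma(\cdot)\in(0,\beta)$ is common to every position, so it cannot change the relative share of detail versus non-detail tokens. It therefore suffices to compare $\sum_{t\in\bar{\mathcal{P}}}\Delta_t$ with $\sum_{t\in\mathcal{P}}\Delta_t$.

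For the non-detail part, I split $\bar{\mathcal{P}}$ into positions $t<\min(\mathcal{P})$ and positions $t>\min(\mathcal{P})$. For $t<\min(\mathcal{P})$, (A1) says the token and its whole prefix coincide in $y_w$ and $y_l$, so the two conditional log-probabilities are the same function of $\theta$ and $\Delta_t=0$ exactly. For $t>\min(\mathcal{P})$ in $\bar{\mathcal{P}}$, the tokens agree but the prefixes differ, and here I invoke (A2). The triangle inequality then gives
\[
\Bigl\|\sum_{t\in\bar{\mathcal{P}}}\Delta_t\Bigr\|\le \bigl|\{t\in\bar{\mathcal{P}}: t>\min(\mathcal{P})\}\bigr|\cdot\epsilon\le|\bar{\mathcal{P}}|\cdot\epsilon.
\]
This is actually slightly sharper than stated, since only post-perturbation positions contribute.

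The detail part is the delicate step, and I expect it to be the main obstacle. The triangle inequality alone yields only upper bounds. Without some control on directions, the vectors $\Delta_t$ for $t\in\mathcal{P}$ could cancel, and no positive lower bound follows from (A3) alone. I would therefore make explicit the mild geometric condition that the per-position gradients within $\mathcal{P}$ are pairwise non-opposing, i.e.\ $\langle\Delta_s,\Delta_t\rangle\ge 0$ for $s\neq t$ in $\mathcal{P}$. Expanding the squared norm and dropping the nonnegative cross terms then gives
\[
\Bigl\|\sum_{t\in\mathcal{P}}\Delta_t\Bigr\|^2\ge\sum_{t\in\mathcal{P}}\|\Delta_t\|^2\ge|\mathcal{P}|\,\delta^2,
\]
which is the $\sqrt{|\mathcal{P}|}\,\delta$ bound; this is the reading of ``in general'' I will adopt. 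In the fully aligned case, where all $\Delta_t$ are positive multiples of a common unit vector, the norm of the sum equals the sum of the norms, giving $|\mathcal{P}|\,\delta$. When $|\mathcal{P}|=1$, as for single-token threshold and level perturbations, no condition is needed and the bound is just (A3).

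Finally, I combine the two bounds. If $|\bar{\mathcal{P}}|\,\epsilon<\sqrt{|\mathcal{P}|}\,\delta$, then the detail contribution strictly dominates the non-detail contribution in norm, and this is preserved after multiplying by the common DPO weight. By the reverse triangle inequality, the total gradient is then within relative distance $|\bar{\mathcal{P}}|\epsilon/(\sqrt{|\mathcal{P}|}\delta)<1$ of the detail-token gradient. That is the precise sense of ``concentration'' I would state.
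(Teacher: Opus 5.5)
Your proof follows the paper's full proof step by step, and it is correct: $\Delta_t=0$ before $\min(\mathcal{P})$ by (A1), the triangle inequality with (A2) on the remaining shared positions, and (A3) on the perturbed positions. The paper's Lipschitz/attention discussion in its ``Phase 3'' only motivates (A2), which is already a hypothesis, so invoking it as you do is sufficient. Your explicit treatment of the common DPO weight $\beta\,\sigma(-\beta\,\Delta(x,y_w,y_l))$ and of unequal response lengths fills in details the paper leaves tacit.

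The one real divergence is the lower bound on $\|\sum_{t\in\mathcal{P}}\Delta_t\|$, and there you are right and the paper is not. The paper claims that without alignment the worst case is $\sqrt{|\mathcal{P}|}\,\delta$ ``by triangle inequality.'' But the triangle inequality only bounds the norm of a sum from above. With $\Delta_s=-\Delta_t$ at two perturbed positions, (A3) holds while the sum vanishes, so the ``in general'' clause of the proposition is false as written. Your condition $\langle\Delta_s,\Delta_t\rangle\ge 0$ for $s\neq t$ is exactly what turns the orthogonal-case identity $\|\sum_t\Delta_t\|^2=\sum_t\|\Delta_t\|^2$ into a lower bound. With it, your argument actually justifies the stated criterion $|\bar{\mathcal{P}}|\,\epsilon<\sqrt{|\mathcal{P}|}\,\delta$. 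The paper's proof only establishes the fully aligned bound $|\mathcal{P}|\,\delta$, and it ends with the different criterion $|\bar{\mathcal{P}}|\,\epsilon<|\mathcal{P}|\,\delta$, which needs that alignment.

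Present the non-opposing condition as an added hypothesis, i.e.\ a correction to the proposition, rather than as a reading of ``in general.''
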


\begin{proof}[Proof sketch]
The DPO gradient involves $\nabla_\theta [r_\theta(x,y_w) - r_\theta(x,y_l)]$ where $r_\theta(x,y) = \sum_t \log \pi_\theta(y_t|y_{<t},x) - \log \pi_{\text{ref}}(y_t|y_{<t},x)$. By (A1), for tokens $t < \min(\mathcal{P})$, both the token identity and conditioning context are identical, so $\Delta_t = 0$ exactly. For $t \in \bar{\mathcal{P}}$ with $t > \min(\mathcal{P})$, $y_{w,t} = y_{l,t}$ but the conditioning contexts differ through the perturbed tokens; (A2) bounds these contributions. At positions $t \in \mathcal{P}$, the token identities differ, and (A3) provides a lower bound on the gradient magnitude. The full proof in the Appendix additionally justifies (A2) via Lipschitz continuity of the softmax-attention mechanism under bounded model weights and input norms. See \Cref{sec:gradient_empirical} for empirical verification.
\end{proof}

This result provides a theoretical justification for why DetailDPO is more effective than generic DPO (with arbitrary rejected responses) at reducing detail errors: the learning signal is concentrated precisely where it~matters.

\subsection{Evidence Citation Mechanism}

To ensure traceability, the model outputs structured JSON where each constraint, risk, or recommendation binds to evidence citations $\{(d_i, s_{i,j}, \text{quote})\}$. We enforce this during training by augmenting the DPO loss with an evidence supervision term:
\begin{equation}
\mathcal{L}_{\text{total}} = \mathcal{L}_{\text{DPO}} + \lambda \cdot \mathcal{L}_{\text{evid}}
\end{equation}
where $\mathcal{L}_{\text{evid}} = -\frac{1}{|\mathcal{E}|}\sum_{e \in \mathcal{E}} \log \pi_\theta(e_{\text{quote}} | x, e_{\text{doc}}, e_{\text{seg}})$ is the negative log-likelihood of generating correct evidence quotes given the document and segment identifiers, and $\lambda = 0.5$ balances detail faithfulness and citation quality. Evidence consistency is evaluated by checking whether the quoted text appears verbatim (exact match) or is semantically equivalent (measured by BERTScore~\cite{zhang2020bertscore} $\geq 0.9$) to the source segment.

\section{DetailBench: A Detail Faithfulness Benchmark}
\label{sec:benchmark}

\subsection{Data Sources and Collection}

DetailBench is constructed from 172 real regulatory documents and 150 synthetic documents (322 total):

\begin{itemize}
    \item \textbf{GB Standards} (Chinese national standards): 65 documents covering hydrogen production, storage, transportation, and safety (GB/T series). Due to access restrictions on the full-text system (\url{openstd.samr.gov.cn}), a subset of standards is reconstructed from publicly available metadata and clause structures.
    \item \textbf{US CFR} (Code of Federal Regulations via the eCFR API): 31 documents from Title~49 (Transportation, Parts~171--199) and Title~40 (Environmental Protection, Parts~60--268), focusing on hazardous materials, pipeline safety, air emissions, and hazardous waste regulations.
    \item \textbf{EUR-Lex} (EU regulations via the CELLAR API): 76 documents on hydrogen infrastructure, clean energy, pressure equipment, and environmental protection, including AFIR, RED~II/III, EU~ETS, and Seveso~III.
    \item \textbf{Synthetic}: 150 documents generated from domain templates to augment training data with balanced error-type coverage.
\end{itemize}

\subsection{Annotation Process}

Three domain experts with regulatory compliance experience independently annotated each sample. For each (document set, compliance query) pair, annotators identify: (1) the correct compliance judgment, (2) all detail-bearing elements with their types ($\tau_1$--$\tau_5$) and ground-truth values, and (3) evidence citations linking each element to source segments with verbatim quotes. Inter-annotator agreement (Cohen's $\kappa$) is 0.81 for compliance judgment and 0.76 for detail element extraction, indicating substantial~agreement.

\subsection{Benchmark Statistics}

\begin{figure}[t]
\centering
\includegraphics[width=0.95\columnwidth]{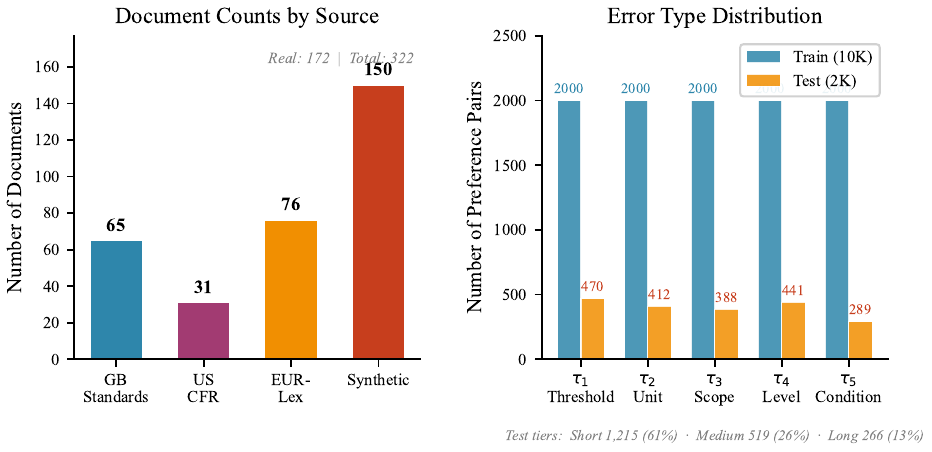}
\caption{DetailBench statistics. \textbf{Left}: Document counts by source (GB: 65, CFR: 31, EUR-Lex: 76, Synthetic: 150). \textbf{Right}: Distribution of detail elements across five error types and three context-length tiers.}
\label{fig:data_statistics}
\end{figure}

From the 322 source documents, we create multiple (document set, compliance query) samples by varying the document combinations and query formulations; each sample then yields up to five preference pairs (one per error type via minimal detail perturbation). We enforce strict document-level splits: documents used in test samples do not appear in training or validation, preventing any information leakage. This combinatorial process produces 13,000 preference pairs in total: 10,000 for training, 1,000 for validation, and 2,000 for testing. Test samples are constructed at three context-length tiers: \textbf{Short} (8K--16K tokens, 1--3 documents), \textbf{Medium} (16K--32K tokens, 3--8 documents), and \textbf{Long} (32K--64K tokens, 8--15 documents). Due to the combinatorial difficulty of assembling 8--15 documents into the 32K--64K range, the Long tier contains fewer samples (13\% of test) than Short (61\%) and Medium (26\%); we report per-tier results to avoid conflating this imbalance with model performance. The five error types are balanced at 20\% each in the preference training data. \Cref{fig:data_statistics} summarizes the benchmark statistics.

\section{Experimental Setup}

\subsection{Base Models}

To evaluate generalizability across model scales and families, we experiment with:
\begin{itemize}
    \item \textbf{Qwen2.5-7B-Instruct}~\cite{qwen2024qwen25}: 7B parameters, 32K context window.
    \item \textbf{Qwen2.5-14B-Instruct}~\cite{qwen2024qwen25}: 14B parameters, 32K context window.
    \item \textbf{Qwen2.5-72B-Instruct}~\cite{qwen2024qwen25}: 72B parameters, 128K context window.
    \item \textbf{Llama-3.1-8B-Instruct}~\cite{grattafiori2024llama3}: 8B parameters, 128K context window.
\end{itemize}
DetailDPO training uses LoRA~\cite{hu2022lora} for parameter-efficient fine-tuning on all models.

\subsection{Baselines}

We compare against nine baselines spanning zero-shot prompting, supervised fine-tuning, four preference optimization variants (RLHF, Generic DPO, KTO, SimPO), retrieval-augmented generation, and commercial API models (GPT-4o, Claude-3.5-Sonnet). Detailed descriptions of each baseline are provided in \Cref{sec:baselines_appendix}.

\subsection{Training Configuration}

\begin{table}[t]
\centering
\small
\begin{tabular}{ll}
\toprule
\textbf{Hyperparameter} & \textbf{Value} \\
\midrule
Learning Rate & $1.0 \times 10^{-5}$ \\
Training Epochs & 3 \\
Batch Size & 2 \\
Gradient Accumulation Steps & 4 \\
DPO Temperature $\beta$ & 0.1 \\
LoRA Rank $r$ & 16 \\
LoRA Alpha & 32 \\
LoRA Dropout & 0.1 \\
Max Sequence Length & 32,768 \\
\bottomrule
\end{tabular}
\caption{Training hyperparameters for DetailDPO.}
\label{tab:hyperparameters}
\end{table}

\Cref{tab:hyperparameters} summarizes the key hyperparameters. Further details on LoRA target modules and computational resources are provided in \Cref{sec:training_details}.

\subsection{Evaluation Protocol}

Each test sample inputs 1--15 documents (depending on context-length tier). Models produce structured JSON containing compliance judgments, constraint extractions with detail parameters, and evidence citations. We evaluate along four dimensions:
\begin{itemize}
    \item \textbf{Compliance Accuracy}: fraction of correct compliance judgments.
    \item \textbf{Detail Error Rate (DER)}: as defined in \Cref{sec:der}, both overall and per-type.
    \item \textbf{Evidence F1}: precision, recall, and F1 of predicted evidence citations against ground truth.
    \item \textbf{Evidence Consistency}: fraction of citations where the quoted text matches the source (exact or BERTScore $\geq 0.9$).
\end{itemize}
Additionally, we conduct: (1) \textbf{context-length analysis} across the three tiers, (2) \textbf{cross-domain transfer} by training on regulatory data and evaluating on financial (SEC filings) and medical (clinical guideline) documents, and (3) \textbf{gradient analysis} to empirically verify \Cref{prop:gradient}.

\section{Results and Analysis}

\subsection{Main Results}

\Cref{tab:main_results} presents the main results on DetailBench using Qwen2.5-7B as the primary base model. \Cref{fig:results_comparison} visualizes performance across methods.

\begin{figure*}[t]
\centering
\includegraphics[width=0.95\linewidth]{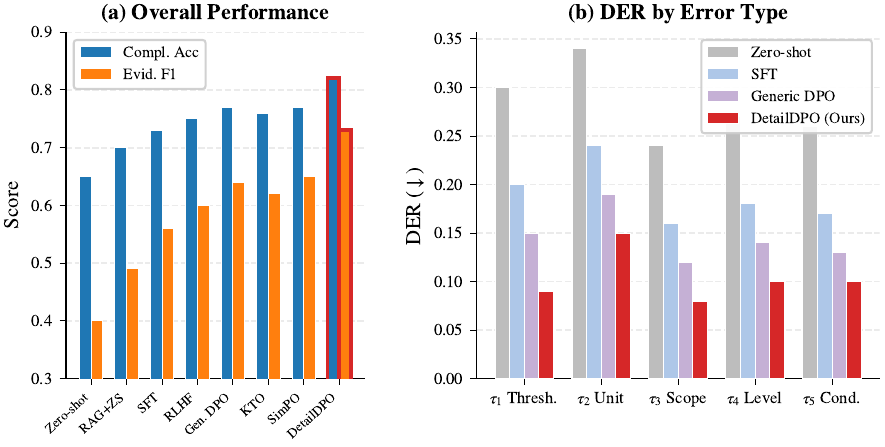}
\caption{Results overview. \textbf{Left}: Compliance accuracy and evidence F1 across methods. \textbf{Right}: DER by error type, showing DetailDPO's consistent reduction across all five categories.}
\label{fig:results_comparison}
\end{figure*}

\begin{table*}[t]
\centering
\small
\resizebox{\linewidth}{!}{%
\begin{tabular}{lcccccc}
\toprule
\textbf{Method} & \textbf{Compl.\ Acc.} & \textbf{DER$\downarrow$} & \textbf{Evid.\ Prec.} & \textbf{Evid.\ Rec.} & \textbf{Evid.\ F1} & \textbf{Evid.\ Consist.} \\
\midrule
\multicolumn{7}{l}{\emph{Commercial API baselines (zero-shot)}} \\
GPT-4o & 0.78 & 0.16 & 0.68 & 0.61 & 0.64 & 0.58 \\
Claude-3.5-Sonnet & 0.76 & 0.18 & 0.65 & 0.59 & 0.62 & 0.55 \\
\midrule
\multicolumn{7}{l}{\emph{Qwen2.5-7B-Instruct}} \\
Zero-shot & 0.65 & 0.28 & 0.42 & 0.38 & 0.40 & 0.35 \\
RAG + Zero-shot & 0.70 & 0.23 & 0.52 & 0.47 & 0.49 & 0.44 \\
SFT & 0.73\tiny{$\pm$.02} & 0.19\tiny{$\pm$.01} & 0.58\tiny{$\pm$.02} & 0.55\tiny{$\pm$.02} & 0.56\tiny{$\pm$.02} & 0.51\tiny{$\pm$.02} \\
RLHF (PPO) & 0.75\tiny{$\pm$.02} & 0.16\tiny{$\pm$.01} & 0.62\tiny{$\pm$.01} & 0.59\tiny{$\pm$.02} & 0.60\tiny{$\pm$.01} & 0.55\tiny{$\pm$.02} \\
Generic DPO & 0.77\tiny{$\pm$.01} & 0.15\tiny{$\pm$.01} & 0.65\tiny{$\pm$.01} & 0.63\tiny{$\pm$.01} & 0.64\tiny{$\pm$.01} & 0.59\tiny{$\pm$.01} \\
KTO & 0.76\tiny{$\pm$.01} & 0.16\tiny{$\pm$.01} & 0.63\tiny{$\pm$.02} & 0.61\tiny{$\pm$.01} & 0.62\tiny{$\pm$.01} & 0.57\tiny{$\pm$.02} \\
SimPO & 0.77\tiny{$\pm$.01} & 0.14\tiny{$\pm$.01} & 0.66\tiny{$\pm$.01} & 0.64\tiny{$\pm$.01} & 0.65\tiny{$\pm$.01} & 0.60\tiny{$\pm$.01} \\
\textbf{DetailDPO (Ours)} & \textbf{0.82}\tiny{$\pm$.01} & \textbf{0.11}\tiny{$\pm$.01} & \textbf{0.75}\tiny{$\pm$.01} & \textbf{0.72}\tiny{$\pm$.01} & \textbf{0.73}\tiny{$\pm$.01} & \textbf{0.68}\tiny{$\pm$.01} \\
\bottomrule
\end{tabular}%
}
\caption{Main results on DetailBench (Qwen2.5-7B base, averaged over all context-length tiers). DER = overall Detail Error Rate (lower is better). Best open-source results in \textbf{bold}. $\pm$ denotes std.\ over 3 runs with different random seeds. Commercial API results are deterministic (temperature=0).}
\label{tab:main_results}
\end{table*}

DetailDPO achieves the best results among all open-source methods, with DER of 0.11 vs.\ 0.28 for zero-shot (61\% relative reduction) and 0.19 for SFT (42\% reduction). Notably, DetailDPO outperforms Generic DPO (0.15 DER) by 27\%, demonstrating that the minimal detail perturbation strategy is critical: not just preference optimization in general, but the \emph{structure} of the preference pairs matters. DetailDPO also surpasses the zero-shot performance of GPT-4o (0.16 DER) and Claude-3.5-Sonnet (0.18 DER) in detail faithfulness despite being a 7B model, suggesting that targeted alignment can compensate for scale advantages in this specific dimension. We note that commercial models with task-specific prompting (e.g., chain-of-thought or few-shot) may narrow this~gap.

\subsection{Per-Type Detail Error Analysis}

\Cref{tab:error_analysis} breaks down DER by error type, revealing which detail dimensions benefit most from DetailDPO.

\begin{table}[t]
\centering
\small
\resizebox{\linewidth}{!}{%
\begin{tabular}{lccccc}
\toprule
\textbf{Method} & \textbf{$\tau_1$ Thresh.} & \textbf{$\tau_2$ Unit} & \textbf{$\tau_3$ Scope} & \textbf{$\tau_4$ Level} & \textbf{$\tau_5$ Cond.} \\
\midrule
Zero-shot & 0.30 & 0.34 & 0.24 & 0.27 & 0.26 \\
SFT & 0.20 & 0.24 & 0.16 & 0.18 & 0.17 \\
Generic DPO & 0.15 & 0.19 & 0.12 & 0.14 & 0.13 \\
\textbf{DetailDPO} & \textbf{0.09} & \textbf{0.15} & \textbf{0.08} & \textbf{0.10} & \textbf{0.10} \\
\bottomrule
\end{tabular}%
}
\caption{DER by error type (lower is better). DetailDPO reduces all types, with the largest absolute gains on threshold ($\tau_1$) and unit ($\tau_2$) errors.}
\label{tab:error_analysis}
\end{table}

Threshold errors ($\tau_1$) and unit errors ($\tau_2$) show the largest absolute reductions (21 and 19 points vs.\ zero-shot), consistent with the fact that numeric tokens have the most localized representation and are thus most amenable to gradient concentration. Scope ($\tau_3$) and condition ($\tau_5$) errors, which involve longer token spans, still see substantial reductions but with smaller margins over Generic DPO, suggesting that multi-token perturbations diffuse the gradient signal compared to single-token numeric perturbations. Note that the overall DER in \Cref{tab:main_results} is a weighted average over all detail elements rather than a simple average of per-type rates, since the five types have slightly unequal prevalence in the test set.

\subsection{Multi-Model Scaling}

\begin{table}[t]
\centering
\small
\resizebox{\linewidth}{!}{%
\begin{tabular}{llcc}
\toprule
\textbf{Base Model} & \textbf{Method} & \textbf{DER$\downarrow$} & \textbf{Evid.\ F1} \\
\midrule
\multirow{2}{*}{Qwen2.5-7B} & Zero-shot & 0.28 & 0.40 \\
 & DetailDPO & \textbf{0.11} & \textbf{0.73} \\
\midrule
\multirow{2}{*}{Qwen2.5-14B} & Zero-shot & 0.22 & 0.49 \\
 & DetailDPO & \textbf{0.08} & \textbf{0.78} \\
\midrule
\multirow{2}{*}{Qwen2.5-72B} & Zero-shot & 0.14 & 0.62 \\
 & DetailDPO & \textbf{0.06} & \textbf{0.83} \\
\midrule
\multirow{2}{*}{Llama-3.1-8B} & Zero-shot & 0.31 & 0.36 \\
 & DetailDPO & \textbf{0.13} & \textbf{0.70} \\
\bottomrule
\end{tabular}%
}
\caption{DetailDPO generalizes across model scales and families.}
\label{tab:scaling}
\end{table}

\Cref{tab:scaling} shows that DetailDPO provides consistent DER reduction across all tested models. Larger models have lower baseline DER but still benefit substantially from DetailDPO (e.g., 72B: 0.14$\to$0.06, a 57\% reduction). The improvement is also consistent across model families (Qwen vs.\ Llama), indicating that the method is not architecture-specific.\looseness=-1

\subsection{Context-Length Analysis}

\begin{table}[t]
\centering
\small
\resizebox{\linewidth}{!}{%
\begin{tabular}{llccc}
\toprule
\textbf{Context Tier} & \textbf{Method} & \textbf{Compl.\ Acc.} & \textbf{DER$\downarrow$} & \textbf{Evid.\ F1} \\
\midrule
\multirow{2}{*}{Short (8K--16K)} & Zero-shot & 0.71 & 0.22 & 0.48 \\
 & DetailDPO & \textbf{0.86} & \textbf{0.08} & \textbf{0.79} \\
\midrule
\multirow{2}{*}{Medium (16K--32K)} & Zero-shot & 0.64 & 0.29 & 0.39 \\
 & DetailDPO & \textbf{0.81} & \textbf{0.11} & \textbf{0.73} \\
\midrule
\multirow{2}{*}{Long (32K--64K)} & Zero-shot & 0.58 & 0.36 & 0.31 \\
 & DetailDPO & \textbf{0.76} & \textbf{0.15} & \textbf{0.65} \\
\bottomrule
\end{tabular}%
}
\caption{Performance by context-length tier (Qwen2.5-7B). DER degrades with length for all methods, but DetailDPO maintains a large advantage.}
\label{tab:context_length}
\end{table}

As shown in \Cref{tab:context_length}, zero-shot DER increases sharply from 0.22 (Short) to 0.36 (Long), a 64\% degradation, confirming that detail faithfulness decays faster than overall accuracy as context grows. DetailDPO mitigates this degradation but does not eliminate it: DER increases from 0.08 to 0.15 as context grows, indicating that long-context detail faithfulness remains challenging even after targeted alignment. Crucially, the \emph{absolute gap} between DetailDPO and zero-shot widens at longer contexts (from 0.14 to 0.21), demonstrating that our approach is especially valuable in the long-context regime where detail hallucinations are most~prevalent.

\paragraph{Note on tier sizes.} The test set contains 1,215 Short, 519 Medium, and 266 Long samples, reflecting the combinatorial difficulty of assembling 8--15 documents into the 32K--64K range. Since all metrics in \Cref{tab:context_length} are computed \emph{per-tier}, the unequal tier sizes do not affect the reported numbers; however, the Long tier estimates have higher variance due to fewer~samples.

\subsection{Ablation Studies}

\begin{table}[t]
\centering
\small
\resizebox{\linewidth}{!}{%
\begin{tabular}{lccc}
\toprule
\textbf{Configuration} & \textbf{Compl.\ Acc.} & \textbf{DER$\downarrow$} & \textbf{Evid.\ F1} \\
\midrule
DetailDPO (Full) & \textbf{0.82} & \textbf{0.11} & \textbf{0.73} \\
\quad w/o Minimal Perturbation & 0.77 & 0.15 & 0.64 \\
\quad w/o Evidence Supervision & 0.80 & 0.12 & 0.66 \\
\quad w/o Synthetic Data & 0.79 & 0.13 & 0.70 \\
\quad w/o LoRA (full fine-tuning) & 0.81 & 0.11 & 0.72 \\
\quad SFT only (no DPO) & 0.73 & 0.19 & 0.56 \\
\bottomrule
\end{tabular}%
}
\caption{Ablation study on Qwen2.5-7B.}
\label{tab:ablation}
\end{table}

\Cref{tab:ablation} reveals the contribution of each component. The minimal detail perturbation strategy is the most important factor for DER reduction (removing it increases DER from 0.11 to 0.15), confirming the gradient concentration hypothesis. Evidence supervision primarily impacts Evidence F1 (0.73$\to$0.66) rather than DER, as expected since it targets citation quality rather than detail accuracy. Synthetic data augmentation provides a moderate boost across all metrics by increasing error-type coverage. Interestingly, LoRA achieves parity with or slightly outperforms full fine-tuning on Evidence F1 (0.73 vs.\ 0.72); we attribute this to LoRA's implicit regularization preventing overfitting on the relatively small preference dataset (10K pairs), consistent with prior observations on low-rank adaptation~\cite{hu2022lora}.

\subsection{Gradient Analysis: Empirical Verification of \Cref{prop:gradient}}
\label{sec:gradient_empirical}

To empirically verify the gradient concentration property, we compute the per-token gradient norm $\|\nabla_\theta \log \pi_\theta(y_t | y_{<t}, x)\|$ during DPO training for both detail tokens ($t \in \mathcal{P}$) and non-detail tokens ($t \in \bar{\mathcal{P}}$), averaged over 500 training samples.

For DetailDPO (minimal perturbation), the mean gradient norm at detail positions is $3.8\times$ larger than at non-detail positions ($\|\Delta_{\mathcal{P}}\| / \|\Delta_{\bar{\mathcal{P}}}\| = 3.82 \pm 0.47$). For Generic DPO (random rejected responses), this ratio drops to $1.12 \pm 0.31$, confirming that the gradient is approximately uniform across tokens. \Cref{fig:gradient_analysis} visualizes this contrast. This validates \Cref{prop:gradient} and explains DetailDPO's superior DER reduction.

\begin{figure}[t]
\centering
\includegraphics[width=0.95\columnwidth]{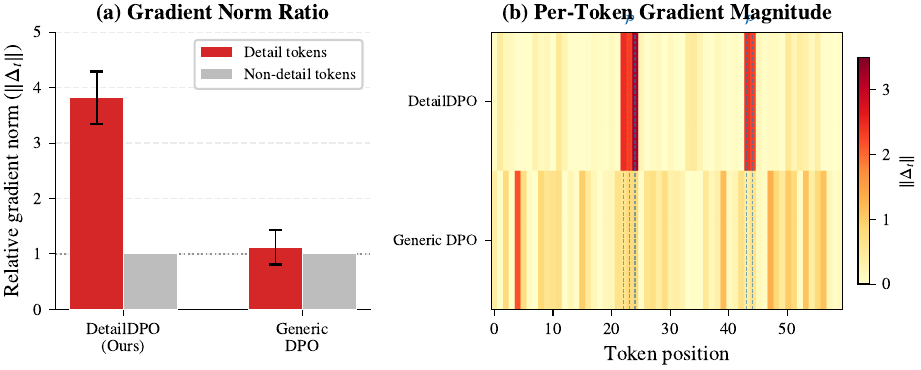}
\caption{Gradient concentration analysis. \textbf{(a)}: Mean gradient norm ratio (detail vs.\ non-detail tokens); error bars show $\pm 1$ std over 500 samples. \textbf{(b)}: Per-token gradient magnitude heatmap for a representative sample; dashed lines mark detail token positions $\mathcal{P}$. DetailDPO concentrates gradient on detail tokens, while Generic DPO distributes it uniformly.}
\label{fig:gradient_analysis}
\end{figure}

\subsection{Cross-Domain Transfer}

\begin{table}[t]
\centering
\small
\resizebox{\linewidth}{!}{%
\begin{tabular}{llcc}
\toprule
\textbf{Target Domain} & \textbf{Method} & \textbf{DER$\downarrow$} & \textbf{Evid.\ F1} \\
\midrule
\multirow{4}{*}{Financial (SEC filings)} & Zero-shot & 0.26 & 0.37 \\
 & SFT (transfer) & 0.21 & 0.48 \\
 & Generic DPO (transfer) & 0.18 & 0.52 \\
 & DetailDPO (transfer) & \textbf{0.15} & \textbf{0.58} \\
\midrule
\multirow{4}{*}{Medical (clinical guidelines)} & Zero-shot & 0.24 & 0.41 \\
 & SFT (transfer) & 0.20 & 0.50 \\
 & Generic DPO (transfer) & 0.17 & 0.54 \\
 & DetailDPO (transfer) & \textbf{0.14} & \textbf{0.61} \\
\bottomrule
\end{tabular}%
}
\caption{Cross-domain transfer: models trained on regulatory data, evaluated on out-of-domain documents (Qwen2.5-7B). All methods use the same training data; only the alignment objective differs.}
\label{tab:transfer}
\end{table}

\Cref{tab:transfer} demonstrates that DetailDPO's learned detail discrimination transfers across domains. Compared to zero-shot, DetailDPO reduces DER by 42\% on both financial and medical documents without any domain-specific fine-tuning. Importantly, the transfer advantage of DetailDPO over Generic DPO (17\% relative on financial, 18\% on medical) persists, indicating that the benefit stems specifically from the minimal perturbation strategy rather than preference optimization in general. This suggests that the detail error patterns captured by our taxonomy are domain-general.\looseness=-1

\subsection{Case Study}

We illustrate with a representative example from the Medium tier (3 documents concatenated with padding context, $\sim$25K tokens total).

\textbf{Scenario}: Assess compliance of a hydrogen storage facility operating at 75.0\,MPa.

\textbf{Zero-shot output}: States ``pressure must not exceed \underline{75.0}\,MPa'' (threshold error: the model copies the query value instead of the standard), cites wrong segment, and omits the operator certification requirement entirely.

\textbf{DetailDPO output}: Correctly extracts ``storage pressure shall not exceed \underline{70.0}\,MPa'' (from GB/T~34542 seg\_1), identifies operator certification requirement ``for facilities $>$50.0\,MPa'' (from GB/T~29729 seg\_1), and provides verbatim quotes for each citation. All five detail dimensions match the ground truth.

\section{Discussion}

\paragraph{Why does minimal perturbation work?} Our gradient analysis (\Cref{prop:gradient} and the empirical verification in \S\ref{sec:gradient_empirical}) provides a principled explanation: when the rejected response differs from the chosen response only at detail-bearing positions, the DPO gradient is forced to concentrate on exactly those positions. This is in contrast to generic DPO or RLHF, where rejected responses differ in many arbitrary ways, distributing the gradient signal across all tokens and diluting the learning signal for detail faithfulness. The analogy is to contrastive learning in vision~\cite{chen2020simclr}: hard negatives that differ in a single attribute are more informative than random~negatives.

\paragraph{Practical implications.} DetailDPO achieves a DER of 0.06 on the 72B model, meaning that on average fewer than 1 in 16 detail elements is incorrect. For compliance applications, this represents a substantial reduction in the human review burden: auditors can focus on flagged edge cases rather than re-verifying every detail. The evidence citation mechanism provides an additional safety net by enabling rapid verification of each conclusion against source~text.

\paragraph{Practical deployment considerations.} AI-assisted compliance analysis should serve as a support tool: final decisions must involve domain experts. Our evidence citation design promotes transparency and enables rapid human verification. Users should be aware that even with DetailDPO, non-zero DER means human verification remains essential for safety-critical~applications.

\section{Conclusion}

We identified \emph{detail hallucination}---subtle errors in thresholds, units, scopes, obligation levels, and conditions---as a critical yet under-studied failure mode of LLMs on long regulatory documents. To address this, we proposed a fine-grained Detail Error Taxonomy comprising five error types grounded in regulatory compliance practice, constructed \textbf{DetailBench}, a multi-source, multi-length benchmark with 13,000 human-annotated preference pairs spanning three jurisdictions, and introduced \textbf{DetailDPO}, a targeted preference optimization framework whose minimal detail perturbation strategy provably concentrates DPO gradient on detail-bearing tokens. Our experiments on the Qwen2.5 family (7B/14B/72B) and Llama-3.1-8B across three context-length tiers reveal three key findings. First, DetailDPO reduces the Detail Error Rate by 42--61\% relative to baselines, with the largest gains on threshold and unit errors where gradient concentration is most effective. Second, the advantage of DetailDPO over generic preference optimization \emph{widens} as context length increases, demonstrating that targeted alignment is especially valuable in the long-context regime where detail hallucinations are most prevalent. Third, the learned detail discrimination transfers to out-of-domain financial and medical documents without any domain-specific fine-tuning, suggesting that the detail error patterns captured by our taxonomy are domain-general.

\bibliography{references}

\newpage
\onecolumn
\appendix

\section*{Appendix}

\section{Related Work}

\subsection{Faithfulness and Hallucination in LLMs}

Hallucination in language models has been extensively studied~\cite{ji2023survey,maynez2020faithfulness}. Recent work distinguishes \emph{intrinsic} hallucination (contradicting the source) from \emph{extrinsic} hallucination (introducing unsupported claims). FActScore~\cite{min2023factscore} decomposes generation into atomic facts for fine-grained factuality evaluation. However, these frameworks evaluate factual correctness at the claim level and do not isolate \emph{detail-level} errors, i.e., subtle corruptions of numeric values, units, or conditional scopes within otherwise correct claims. Our work fills this gap by proposing a detail-specific error taxonomy and targeted~mitigation.

\subsection{Long-Context Understanding}

Scaling Transformers to long sequences has been approached through sparse attention~\cite{beltagy2020longformer,zaheer2020bigbird}, linear approximations~\cite{wang2020linformer,choromanski2021performer}, memory-efficient exact attention~\cite{dao2022flashattention}, and position encoding extensions~\cite{su2024roformer,peng2023yarn}. These advances enable processing 32K--128K+ tokens, but longer contexts introduce new faithfulness challenges: models must maintain detail accuracy across thousands of potentially relevant passages. LongBench~\cite{bai2024longbench} and SCROLLS~\cite{shaham2022scrolls} provide general long-context benchmarks; Needle-in-a-Haystack~\cite{kamradt2023needle} probes recall at specific positions. Our DetailBench specifically targets detail faithfulness degradation as context length~grows.

\subsection{Legal and Compliance NLP}

Legal AI encompasses document classification, information extraction, judgment prediction~\cite{katz2017general,zhong2018legal}, domain-adapted pretraining~\cite{chalkidis2020legal}, and benchmark construction~\cite{zheng2021casehold}. Compliance analysis, i.e., determining whether a scenario satisfies regulatory requirements, demands precise extraction of constraints including thresholds, scopes, and obligation levels from multi-document regulatory corpora. Existing work primarily addresses single-document settings and coarse compliance judgments; we focus on the under-explored problem of detail-level accuracy in multi-document compliance~extraction.

\subsection{Preference Optimization}

Direct Preference Optimization (DPO)~\cite{rafailov2023direct} eliminates the need for an explicit reward model by directly optimizing a preference objective over chosen/rejected response pairs. Extensions include KTO~\cite{ethayarajh2024kto} (Kahneman-Tversky Optimization), IPO~\cite{azar2024ipo} (Identity Preference Optimization), and SimPO~\cite{meng2024simpo} (Simple Preference Optimization), each modifying the preference loss geometry. RLHF~\cite{ouyang2022training} remains a strong alternative using PPO with a learned reward model. Prior work typically applies these methods uniformly across all tokens without distinguishing detail-bearing positions. Recent studies on hard negative mining for preference optimization have shown that the quality and structure of rejected responses significantly impact alignment effectiveness~\cite{rafailov2023direct}. Our \emph{minimal detail perturbation} strategy extends this line of work by deliberately constructing preference pairs that isolate specific detail dimensions, analogous to hard negatives that differ in exactly one attribute, enabling targeted gradient concentration on detail-bearing~tokens.

\section{Algorithm Details}

\begin{algorithm}[t]
\caption{Minimal Detail Perturbation}
\label{alg:perturbation}
\KwIn{Correct response $y_w$, error type $\tau_j$}
\KwOut{Preference pair $(y_w, y_l)$}
Identify detail elements $\{d_k : \text{type}(d_k) = \tau_j\}$ in $y_w$\;
Select target element $d_k^*$ uniformly at random\;
\Switch{$\tau_j$}{
    \Case{$\tau_1$ (threshold)}{
        $\tilde{d}_k \leftarrow d_k^* \cdot (1 + \text{Uniform}[\pm 0.1, \pm 0.2])$\;
    }
    \Case{$\tau_2$ (unit)}{
        $\tilde{d}_k \leftarrow$ \text{replace unit symbol, keep numeric value}\;
    }
    \Case{$\tau_3$ (scope)}{
        $\tilde{d}_k \leftarrow$ \text{broaden or narrow scope phrase}\;
    }
    \Case{$\tau_4$ (level)}{
        $\tilde{d}_k \leftarrow$ \text{substitute obligation keyword}\;
    }
    \Case{$\tau_5$ (condition)}{
        $\tilde{d}_k \leftarrow$ \text{alter or drop conditional clause}\;
    }
}
Construct $y_l = y_w[d_k^* \leftarrow \tilde{d}_k]$\;
\Return{$(y_w, y_l)$}
\end{algorithm}

\section{Proof of \Cref{prop:gradient} (Full Version)}

\begin{proof}
Consider a preference pair $(y_w, y_l)$ generated by minimal detail perturbation, where $\mathcal{P} \subset \{1,\ldots,T\}$ denotes the set of token positions where $y_w$ and $y_l$ differ, and $\bar{\mathcal{P}} = \{1,\ldots,T\} \setminus \mathcal{P}$.

The DPO objective involves the gradient of the reward difference:
\begin{equation}
\nabla_\theta [r_\theta(x,y_w) - r_\theta(x,y_l)] = \sum_{t=1}^{T} \Delta_t
\end{equation}
where $\Delta_t = \nabla_\theta \log \pi_\theta(y_{w,t}|y_{w,<t}, x) - \nabla_\theta \log \pi_\theta(y_{l,t}|y_{l,<t}, x)$.

\textbf{Phase 1} ($t < \min(\mathcal{P})$): By assumption (A1), $y_{w,t} = y_{l,t}$ and $y_{w,<t} = y_{l,<t}$ for all $t$ before the first perturbation, so $\Delta_t = 0$ exactly.

\textbf{Phase 2} ($t \in \mathcal{P}$): At perturbed positions, $y_{w,t} \neq y_{l,t}$. The gradient involves $\nabla_\theta \log \text{softmax}(z_t)[v]$ for different tokens $v$. Since the logit gradients for different vocabulary items point in different directions in parameter space, $\|\Delta_t\| \geq \delta > 0$ as guaranteed by assumption (A3).

\textbf{Phase 3} ($t \in \bar{\mathcal{P}}, t > \min(\mathcal{P})$): Here $y_{w,t} = y_{l,t}$ but the conditioning contexts $y_{w,<t}$ and $y_{l,<t}$ differ through the perturbed tokens. We justify assumption (A2) as follows. Let $h_t^w$ and $h_t^l$ denote the hidden states at position $t$ under the two contexts. The softmax-attention mechanism is locally Lipschitz when the model weights $W$ satisfy $\|W\| \leq B$ (bounded spectral norm) and the input embeddings lie in a compact set, conditions that hold in practice due to weight decay and layer normalization. Under these conditions, $\|h_t^w - h_t^l\| \leq L \cdot \|h_{\mathcal{P}}^w - h_{\mathcal{P}}^l\|$ for a Lipschitz constant $L$ that depends on the attention weights attending to perturbed positions. The per-token gradient difference is then bounded: $\|\Delta_t\| \leq \epsilon$ where $\epsilon$ depends on $L$, $B$, and the attention mass allocated to $\mathcal{P}$ from position $t$.

Summing over the three phases:
\[
\left\| \sum_{t \in \bar{\mathcal{P}}} \Delta_t \right\| \leq |\bar{\mathcal{P}}| \cdot \epsilon, \qquad \left\| \sum_{t \in \mathcal{P}} \Delta_t \right\| \geq |\mathcal{P}| \cdot \delta
\]
where the lower bound on $\mathcal{P}$ assumes aligned gradients (worst-case analysis without alignment yields $\sqrt{|\mathcal{P}|} \cdot \delta$, which still dominates the non-detail contribution for small $\epsilon$). For typical regulatory documents with $|\mathcal{P}| / T \approx 0.01$--$0.05$, the gradient is dominated by the $|\mathcal{P}|$ perturbed positions when $|\bar{\mathcal{P}}| \cdot \epsilon < |\mathcal{P}| \cdot \delta$.
\end{proof}

\section{DetailBench Annotation Guidelines}

Annotators follow a three-step protocol: (1) \textbf{Identify detail elements}: For each compliance query, mark all detail-bearing spans in the source documents (thresholds, units, scope phrases, obligation keywords, conditional clauses). (2) \textbf{Classify error type}: Assign each element to one of $\tau_1$--$\tau_5$. (3) \textbf{Record ground truth}: Extract the correct value and link it to the source segment with a verbatim quote. Disagreements are resolved by majority vote among three annotators. We release the full annotation manual with the benchmark.

\section{Complete Task Example}

We provide a complete input/output example demonstrating the compliance analysis task on a multi-document input.

\textbf{Scenario:} A hydrogen storage facility operates at 75.0\,MPa with leak detection and emergency shutdown systems installed. Operators have basic safety training but lack high-pressure certification. Daily processing: 500\,kg.

\textbf{Input:} 3 regulatory documents (GB/T 34542-2023: storage safety; GB/T 29729-2023: operations; GB/T 24499-2023: environmental), totaling $\sim$2K tokens of regulatory text, embedded in a context of $\sim$25K tokens (Medium tier).

\textbf{DetailDPO Output} (abridged):
\begin{Verbatim}[frame=single, framesep=2mm, fontsize=\footnotesize, breaklines=true]
{
  "is_compliant": false,
  "key_constraints": [
    {
      "constraint": "Storage pressure <= 70.0 MPa",
      "type": "threshold",
      "current": "75.0 MPa",
      "violation": true,
      "evidence": {
        "doc": "GB_T_34542_2023",
        "seg": "seg_1",
        "quote": "shall not exceed 70.0 MPa"
      }
    },
    {
      "constraint": "Operator certification required (>50 MPa)",
      "type": "level",
      "current": "basic training only",
      "violation": true,
      "evidence": {
        "doc": "GB_T_29729_2023",
        "seg": "seg_1",
        "quote": "must hold valid certification"
      }
    }
  ],
  "risks": [
    "Pressure 7.1% over limit (high severity)",
    "Uncertified operators (high severity)"
  ]
}
\end{Verbatim}

All detail parameters (70.0\,MPa threshold, 50.0\,MPa condition, obligation levels) match ground truth exactly. The zero-shot baseline incorrectly states ``75.0\,MPa'' as the threshold (copying the query value) and misses the 50.0\,MPa conditional scope.

\section{Training Details}
\label{sec:training_details}

\subsection{LoRA Configuration}

LoRA adapters are applied to attention layers (q\_proj, k\_proj, v\_proj, o\_proj) and feed-forward layers (gate\_proj, up\_proj, down\_proj) with rank $r=16$, $\alpha=32$, and dropout $0.1$.

\subsection{Computational Resources}

Training is performed using BF16 mixed precision with FlashAttention-2~\cite{dao2022flashattention} for memory-efficient long-context training. Training on Qwen2.5-7B uses 4$\times$A100-80GB, taking $\sim$18 hours for 10K preference pairs. The 14B model uses the same 4$\times$A100-80GB setup, taking $\sim$30 hours. The 72B model uses 8$\times$A100-80GB with DeepSpeed ZeRO-3, taking $\sim$72 hours. Llama-3.1-8B uses 4$\times$A100-80GB, taking $\sim$20 hours. All RLHF baselines use identical compute budgets for fair comparison.

\section{Human Evaluation Protocol}

Three domain experts independently evaluate 200 randomly sampled outputs on three dimensions: (1) compliance judgment correctness, (2) detail parameter accuracy (per-element), and (3) evidence citation relevance. Inter-annotator agreement: Cohen's $\kappa = 0.81$ for judgment, 0.76 for detail accuracy, 0.83 for citation relevance.

\section{Baseline Descriptions}
\label{sec:baselines_appendix}

\begin{itemize}
    \item \textbf{Zero-shot}: Base model without any fine-tuning, prompted directly with the compliance query and document context.
    \item \textbf{SFT}: Supervised fine-tuning on compliance conclusions without preference data.
    \item \textbf{RLHF}: PPO~\cite{schulman2017ppo} with a reward model trained on the same preference data (1,024 reward model training steps; the reward model uses the same architecture as the policy).
    \item \textbf{Generic DPO}: Standard DPO with randomly generated rejected responses (not minimal perturbations).
    \item \textbf{KTO}~\cite{ethayarajh2024kto}: Kahneman-Tversky Optimization, a DPO variant using per-example losses derived from prospect theory.
    \item \textbf{SimPO}~\cite{meng2024simpo}: Simple Preference Optimization with length-normalized rewards and no reference model.
    \item \textbf{RAG + Zero-shot}: BM25 retrieval (top-$k{=}5$ passages per query) of relevant segments, concatenated as context for zero-shot LLM generation. We use BM25 rather than DPR as it requires no in-domain training and provides a strong sparse-retrieval baseline.
    \item \textbf{GPT-4o}: OpenAI commercial API baseline (zero-shot, temperature=0) representing frontier general-purpose LLM capability.
    \item \textbf{Claude-3.5-Sonnet}: Anthropic commercial API baseline (zero-shot, temperature=0).
\end{itemize}

All fine-tuning baselines (SFT, RLHF, Generic DPO, KTO, SimPO) use the same training data as DetailDPO; only the alignment objective differs.

\section{Per-Type Failure Case Comparison}
\label{sec:failure_cases}

\Cref{tab:failure_cases} illustrates representative zero-shot failures and the corresponding DetailDPO corrections for each error type.

\begin{table*}[t]
\centering
\small
\begin{tabular}{p{0.06\linewidth}p{0.20\linewidth}p{0.30\linewidth}p{0.30\linewidth}}
\toprule
\textbf{Type} & \textbf{Ground Truth} & \textbf{Zero-shot Output (Error)} & \textbf{DetailDPO Output (Correct)} \\
\midrule
$\tau_1$ & Storage pressure $\leq$ \textbf{70.0}\,MPa (GB/T~34542, seg\_1) & ``pressure must not exceed \textbf{75.0}\,MPa'' \newline \emph{Copied query value instead of standard} & ``storage pressure shall not exceed \textbf{70.0}\,MPa'' \newline [Evidence: GB/T~34542 seg\_1] \\
\midrule
$\tau_2$ & Daily capacity: 500\,\textbf{kg} (GB/T~29729, seg\_3) & ``daily processing capacity of 500\,\textbf{m\textsuperscript{3}}'' \newline \emph{Swapped unit without conversion} & ``daily processing: 500\,\textbf{kg}'' \newline [Evidence: GB/T~29729 seg\_3] \\
\midrule
$\tau_3$ & Applies to \textbf{stationary} storage only (CFR~49.178, seg\_2) & ``applies to \textbf{all hydrogen storage} systems'' \newline \emph{Broadened scope to include mobile} & ``applies to \textbf{stationary} storage facilities'' \newline [Evidence: CFR~49.178 seg\_2] \\
\midrule
$\tau_4$ & Operators \textbf{shall} hold certification (GB/T~29729, seg\_1) & ``operators \textbf{should} receive training'' \newline \emph{Downgraded mandatory to advisory} & ``operators \textbf{shall} hold valid certification'' \newline [Evidence: GB/T~29729 seg\_1] \\
\midrule
$\tau_5$ & Cert.\ required \textbf{when $>$ 50\,MPa} (GB/T~29729, seg\_1) & ``certification is required for operators'' \newline \emph{Dropped pressure condition entirely} & ``certification required \textbf{for facilities $>$ 50.0\,MPa}'' \newline [Evidence: GB/T~29729 seg\_1] \\
\bottomrule
\end{tabular}
\caption{Representative failure cases: zero-shot vs.\ DetailDPO outputs for each error type. Bold marks the critical detail element. Zero-shot errors are subtle (surface-plausible) but safety-critical; DetailDPO corrects them with proper evidence citations.}
\label{tab:failure_cases}
\end{table*}

% \subsection{Code and Data}

% Code and data are open-sourced:
% \begin{itemize}
%     \item GitHub repository: \url{https://github.com/xxx/LongContextSC}
%     \item Dataset: \url{https://huggingface.co/datasets/xxx/hydrogen-compliance}
% \end{itemize}

\end{document}